\newcommand{\calX}{{\mathcal X}}
\newcommand{\calY}{{\mathcal Y}}
\newcommand{\bc}{{\mathbf c}}
\newcommand{\bx}{{\mathbf x}}
\newcommand{\by}{{\mathbf y}}
\newcommand{\bk}{{\mathbf k}}
\newcommand{\vsp}{\vspace{0.017in}}
\newtheorem{theorem}{Theorem}
\newtheorem{proposition}[theorem]{Proposition}
\newtheorem{problem}{Problem}
\newtheorem{definition}{Definition}
\newtheorem{remark}{Remark}
\newcommand{\blue}{\color{blue}}
\begin{document}

\bstctlcite{}
\title{Task allocation for multi-agent systems via unequal-dimensional optimal transport}
\author{\thanks{This research has been supported by the Swedish Research Council (VR) under grant 2020-03454, KTH Digital Futures, Swedish Research Council Distinguished Professor Grant 2017-01078, and Knut and Alice Wallenberg Foundation Wallenberg Scholar Grant.} Anqi Dong\thanks{Anqi Dong is  with the Division of Decision and Control Systems and Department of Mathematics, KTH Royal Institute of Technology, SE-100 44 Stockholm, Sweden; \tt anqid@kth.se}, Karl H. Johansson,\thanks{Karl H. Johansson is with the Division of Decision and Control Systems and  Digital Futures, KTH Royal Institute of Technology, SE-100 44 Stockholm, Sweden; \tt kallej@kth.se} and Johan Karlsson  \thanks{Johan Karlsson is  with Department of Mathematics and Digital Futures, KTH Royal Institute of Technology, SE-100 44 Stockholm, Sweden; \tt johan.karlsson@math.kth.se}
}

\maketitle
\thispagestyle{empty}

\begin{abstract}
We consider a probabilistic model for large-scale task allocation problems for multi-agent systems, aiming to determine an optimal deployment strategy that minimizes the overall transport cost. 
Specifically, we assign transportation agents to delivery tasks with given pick-up and drop-off locations, pairing the spatial distribution of transport resources with the joint distribution of task origins and destinations. This aligns with the optimal mass transport framework where the problem and is in the unequal-dimensional setting. The task allocation problem can be thus seen as a linear programming problem that minimizes a quadratic transport cost functional, optimizing the energy of all transport units.  The problem is motivated by time-sensitive medical deliveries using drones, such as emergency equipment and blood transport. In this paper, we establish the existence, uniqueness, and smoothness of the optimal solution, and illustrate its properties through numerical simulations.

\end{abstract}

\begin{IEEEkeywords}
Optimal transport, task allocation, multi-agent system, unequal-dimensional matching, UAV
\end{IEEEkeywords}

\section{Introduction}\label{sec:intro}
\IEEEPARstart{E}{fficiently} assigning transportation resources to fulfill delivery requests remains a critical challenge nowadays, especially under scalability and real-time computational constraints. The rise of drone-based delivery has demonstrated their potential for rapid, flexible, and energy-efficient transport of medical supplies, such as blood, defibrillators, first-aid kits, and emergency medications \cite{li2022blood, ling2019aerial, schierbeck2023drone,hii2019evaluation}. Unlike ground-based transport, aerial systems bypass traffic congestion, making them particularly advantageous in urban environments and remote areas where immediate medical intervention is essential. This problem can be framed as a task allocation problem in multi-agent systems, where available transport agents are assigned to fulfill delivery requests. Similar challenges arise in taxi dispatch, food delivery, and mail services, with applications across economics, control \cite{guo2013revising}, and optimization \cite{burkard2012assignment}.

Optimal mass transport (OMT), originating in the 18th century~\cite{monge1781memoire}, is a classical framework for optimizing transportation schedules by balancing supply and demand while minimizing costs. Kantorovich's pioneering work on duality and linear programming~\cite{kantorovich1942translocation}, laid the foundation for significant advancements in OMT. Over the past two decades, contributions by McCann, Villani, and others have extended OMT to diverse fields, including data science \cite{peyre2019computational, Stephanovitch_Dong_Georgiou_2024}, genomics \cite{tronstad2025multistageot}, and economics \cite{chiappori2020multidimensional}. Classical OMT results for existence and uniqueness of solutions, such as the Monge condition, are restricted to the \emph{equal-dimensional} setting, where distributions of the same dimensionality are matched. However, recent extensions to \emph{unequal-dimensional} OMT \cite{chiappori2017multi, chiappori2020multidimensional, mccann2020optimal} have been made for matching distributions of different dimensions. This generalization has proven valuable in economics, resource management, and social sciences. For example, in hedonic pricing, products with multi-attribute features (quality, durability, brand reputation) are optimally paired with consumer preferences. In labor markets, employees with diverse skill sets are assigned to jobs characterized solely by wage.

As the number of transportation agents, delivery tasks, and size of the delivery region grow overwhelmingly large, a microscopic approach that explicitly tracks individual pairwise assignments becomes computationally infeasible. The macroscopic perspective offers a scalable alternative by probabilistically representing the spatial distributions of drones and delivery demands, and considering the collective dynamics of the system. This allows task allocation to be analyzed as an aggregate optimization problem rather than individual assignments. To this end, we describe the spatial distributions of delivery tasks and transport agents by their respective coordinates. We represent the spatial distributions of delivery tasks and transport agents by their coordinates: $\bx^{O}, \bx^{D}$ denote task origins and destinations, with joint distribution of origin-destination pairs as $\mu(\bx^{O}, \bx^{D})$, while $\by$ represents transport agent positions, distributed as $\nu(\by)$. Each agent is ruled to depart from $\by$, picks up a payload at $\bx^{O}$, delivers it to $\bx^{D}$, and returns to $\by$, as shown in Figure \ref{fig:drone}. 
\begin{figure}[htb!]
\centering
\includegraphics[width=0.8\linewidth]{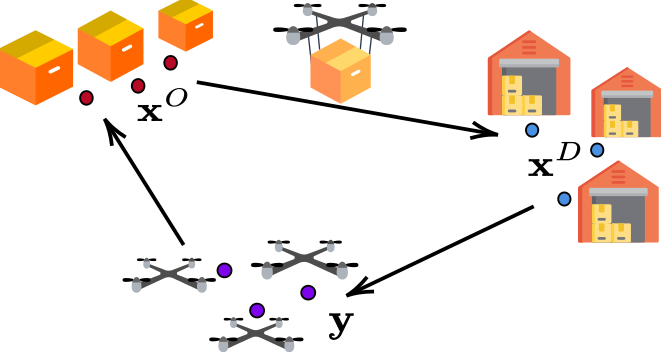}
\caption{Task allocation process: a standby agents begins at $\by$, where the agent picks up shipment at $\bx^O$, delivers them to destination $\bx^D$, and then returns to $\by$.}
\label{fig:drone}
\end{figure}
\noindent This setup exemplifies a mismatch in spatial dimensions -- the standby drones’ locations are represented in $n$ dimensions, whereas the origin-destination (OD) pairs form a $2n$-dimensional space. Consequently, the task allocation problem can be formulated as an unequal-dimensional optimal transport problem, where resources and demand are optimally coupled across distinct spatial dimensions.


In this work, we introduce a novel unequal-dimensional optimal mass transport (OMT) framework for task allocation, addressing the dimensional mismatch between transport agents and delivery requests. We establish the existence and uniqueness of the optimal solution by verifying the twist condition and prove its smoothness by showing that the cost function is in index form, effectively acting as a projection from the high-dimensional origin-destination space to the lower-dimensional agent space. We further validate the framework through numerical simulations, demonstrating its efficiency and the geometric properties of the transport plan.

The rest of the paper is organized as follows. Section \ref{sec:prelim} introduces the necessary background and notations. In Section \ref{sec:formulation}, we formulate the task allocation problem as an unequal-dimensional OMT problem and establish its theoretical foundations, including existence, uniqueness, and smoothness of the solution. Section \ref{sec:example} presents numerical examples. Finally, Section \ref{sec:conclusion}  concludes the article with future directions.

\section{Optimal Transport Preliminaries}\label{sec:prelim} 

The standard optimal matching problem considers two distributions, $\mu$ and $\nu$, which are assumed to be probability measures (i.e., with total mass normalized to 1), and seeks a matching plan that pairs \emph{supply} distribution $\mu$ with \emph{demand} distribution $\nu$. In the Monge formulation \cite{monge1781memoire}, the we seek a optimal transport map $T: x \in \mathbb R^n \mapsto y = T(x)\in \mathbb R^m $  that minimizes the cost functional
\begin{equation*}
\int_{\mathbb{R}^n} c(x, T(x)) \mu(dx),    
\end{equation*}
while aligning $\mu$ with $\nu$, i.e., such that $T_\sharp \mu = \nu$.
Here, $T_\sharp$ denotes the \emph{push-forward} of the distribution, that for any Borel set $S \subset \mathbb{R}^m$, $\mu(T^{-1}(S)) = \nu(S)$.


The Kantorovich relaxation \cite{kantorovich1942translocation} generalizes the Monge problem by considering a coupling measure $\pi$ on the product space $\mathbb{R}^n \times \mathbb{R}^m$. The coupling $\pi$ specifies a probabilistic matching between $\mu$ and $\nu$, with marginals $\mu$ and $\nu$. The objective yields to find $\pi$ that minimizes the cost functional
\begin{equation*}
\iint_{\mathbb{R}^n \times \mathbb{R}^m} c(x, y) \pi(dx, dy),
\end{equation*}
that abiding marginal constraints 
\begin{align*}
\int_{\mathbb{R}^m}  \pi(dx, dy) = \mu(dx) \quad \mbox{and} \quad \int_{\mathbb{R}^n}  \pi(dx, dy) = \nu(dy).
\end{align*}
It is well-known that, under mild conditions, an optimal coupling measure $\pi$ exists. Specifically, if the cost function $c(x, y)$ is lower semi-continuous and the marginal measures $\mu$ and $\nu$ are probability measures on Polish spaces (i.e., complete, separable metric spaces), then there exists a measure $\pi$ that minimizes the cost functional \cite{villani2021topics}. 

Regarding the uniqueness of the coupling measure $\pi$, in the one-dimensional case ($n=m=1$), the cost function $c(x, y): \mathbb{R} \times \mathbb{R} \to \mathbb{R}$ satisfies the \emph{Monge condition} (also, \emph{Spence-Mirrlees condition} in economics) in \cite{rachev1998mass, santambrogio2015optimal}, if the cross difference satisfies
\begin{align}\label{eq:cross_diff}
c(x, y) + c(x^{\prime}, y^{\prime}) - c(x, y^{\prime}) - c(x^{\prime}, y) \leq 0,
\end{align}
for $x^{\prime} \geq x$ and $y^{\prime} \geq y$. Under this condition, the minimizer of the optimal transport problem is unique if and only if the cost function $c(x, y)$ is quasi-antitone/monotone, as in \cite[Theorem 3.1.2]{rachev1998mass}. Intuitively, in matching theory, the cross difference condition ensures that for given matching pairs  $(x, y)$ and $(x^{\prime},y^{\prime})$, reassigning to $(x,y')$ and $(x',y)$ does not improve the total cost. This guarantees that the original matching is both optimal and stable.

In addition, the Monge condition is satisfied for $c(x, y) \in C^1$ if the \emph{twist condition} holds, i.e.,
\begin{equation}\label{eq:twistcond}
D_x c(x, y) \neq D_x c(x, y^{\prime}) \mbox{ and } D_y c(x, y) \neq D_x c(x^{\prime}, y),
\end{equation}
for all $x,y$ and any distinct pairs $y \neq y^{\prime}, x \neq x^{\prime}$, respectively\cite{villani2021topics}. This condition ensures that the mapping $y \mapsto \nabla_x c(x, y)$ is injective for each fixed $x$. In other words, the gradient $\nabla_x c(x, y)$ uniquely determines $y$, and similarly, $\nabla_y c(x, y)$ uniquely determines $x$, ensuring a well-defined and unique pairing/matching plan.

The classical Monge condition may be further refined for $C^2$ cost functions by imposing a stricter requirement on the mixed partial derivative of the cost function $c(x, y)$. Specifically,
\begin{equation*}
\frac{\partial^2 c}{\partial x \partial y} > 0 \quad \text{or} \quad \frac{\partial^2 c}{\partial x \partial y} < 0,
\end{equation*}
that indicates the mapping $y \mapsto \frac{\partial c}{\partial x}(x, y)$ is strictly monotonic, and thus injective, for each fixed $x$. Therefore, the twist condition may be considered as a non-local generalization of the Monge condition.

\begin{definition}[\bf \textit{Pure matching}]
If $T$ is concentrated on a graph of some function $T: \mathbb R^n \mapsto \mathbb R^m$, the stable matching is called pure. This implies that almost all agents of type $x$ must match exclusively with the same agents of type $y = T(x)$.
\end{definition}

In economics, a stable matching refers to an allocation plan  $(x, y) \in \mathbb R^n \times \mathbb R^m$  where there exist payoff functions  $u(x)$  and  $v(y)$  satisfying the budget constraint:
\begin{subequations}
    \begin{equation}
    u(x) + v(y) \leq c(x,y),    
    \end{equation}
    and the reverse inequality
    \begin{equation}
    u(x) + v(y) \geq c(x,y),   
    \end{equation}
\end{subequations}
where $c(x,y)$ is the \emph{transferable utility}. These conditions ensure that no individual can benefit by deviating from the assigned matching, guaranteeing stability in the allocation.

\section{Unequal-dimensional Optimal Transport Formulation for Optimal Task Allocation}\label{sec:formulation}

The classical OMT problem with quadratic cost \cite{benamou2000computational} can be viewed as a special case of a control problem, where the goal is to steer the probability density of a state vector from an initial distribution $\mu(x)$ to a final distribution $\nu(y)$ over a time interval $[0, t_f]$, while minimizing the energy 
\begin{subequations}\label{prob:bb}
\begin{equation}\label{eq:dynamic}
\mathbb E \left\{ \int_0^{t_f} \| u(t) \|^2\right\} \, dt,     
\end{equation}
subject to the system dynamics
\begin{align}\label{eq:count}
&\dot{x} = u, \quad x(0) \sim \mu(x), \mbox{ and } x(t_f) \sim \nu(y).
\end{align}
The system \eqref{eq:count} can be interpreted as a linear dynamical system with trivial prior dynamics $\dot{x} = u(t)$ and the velocity field $v(t,x)$ acts as a control input $u(t)$.
This can be generalized to general dynamics,
$\dot{x} = A(t)x(t) + B(t)u(t)$, 
capturing the collective dynamics of multi-agent systems \cite{chen2016optimal, chen2021optimal, dong2024monge, chen2018state}, with applications in swarm control, flow modeling, and collective motion of particles.
\end{subequations}

We are now in the position to introduce a non-trivial generalization of the equal-dimensional optimal transport framework. This approach models task allocation in multi-agent systems by matching the spatial distribution of agents to the distribution of origin-destination task pairs while minimizing travel costs. We start with the formulation of the problem.

The spatial locations of a task’s origin, destination, and the awaiting agent are represented by their respective coordinates
\begin{equation*}
\bx^{O} =
\begin{bmatrix}
x^O_1 \\
\vdots \\
x^O_n
\end{bmatrix}, \quad
\bx^{D} =
\begin{bmatrix}
x^D_1 \\
\vdots \\
x^D_n 
\end{bmatrix}, \quad \mbox{and }
\by =
\begin{bmatrix}
y_1 \\
\vdots \\
y_n
\end{bmatrix},    
\end{equation*}%
where $\bx^{O}$ represents the origin coordinates of the delivery task, $\bx^{D}$ the destination coordinates, and $\by$ the location of the waiting agent. We define the trip expense using the squared Euclidean norm:
\begin{align*}
c(\bx^O, \bx^D, \by) = \|\bx^O - \by\|_2^2 + \|\bx^O - \bx^D\|_2^2 + \|\bx^D - \by\|_2^2,
\end{align*}
that may be decomposed as following elementary components:
\begin{enumerate}
\item[-] \emph{Pickup cost} $\|\bx^O - \by\|_2^2$: travel from agents’ location $\by$ to task origin $\bx^O$;
\item[-] \emph{Shipping cost} $\|\bx^O - \bx^D\|_2^2$: transport from $\bx^O$ to destination $\bx^D$;
\item[-] \emph{Returning cost} $\|\bx^D - \by\|_2^2$: return from $\bx^D$ to $\by$.
\end{enumerate}    
The spatial distribution of awaiting agents is modeled as $\nu(\by)$ over $\mathbb{R}^{n}$, where $\by$ denotes the agents’ locations.\footnote{Typically, $n=2 \mbox{ or } 3$ in the allocation task.} Similarly, delivery requests are represented by $\mu(\bx^O, \bx^D)$, with $\bx^O$  as the origin and $\bx^D$ as the destination.

The joint probability distribution $\pi(\bx^O, \bx^D, \by)$, known as a coupling, represents the entire allocation process by specifying how agents at locations $\by$ are assigned to a task with origin-destination pairs $(\bx^O, \bx^D)$. This coupling quantifies the proportion of agents dispatched to pick up shipments at origins $\bx^O$ and transport them to destinations $\bx^D$. Within this framework, $\pi(\bx^O, \bx^D, \by)$ captures the spatial dependencies between agents and tasks, ensuring an optimal allocation that minimizes travel cost $\bc(\bx^O, \bx^D, \by)$ for all transport agents.

The problem focuses on identifying an optimal allocation plan with the distributions of delivery tasks' origin and destination provided. We assume that the joint distribution of origins and destinations, denoted as $\mu(\bx^O, \bx^D)$, is known. In particular, let $\nu(\by)$ represent the distribution of awaiting agents and $\mu(\bx^O, \bx^D)$ represent the distribution of shipment requests. The objective is to optimize the allocation plan $\pi(\bx^O, \bx^D, \by) \in \mathbb{R}^n \times \mathbb{R}^n \times \mathbb{R}^n$ that minimizes the total transportation cost defined as 
\begin{align}\label{eq:obj_o}
\langle c, \pi \rangle = \iiint_{\bx^O,\bx^D,\by} c(\bx^O, \bx^D, \by)  d\pi(\bx^O, \bx^D, \by)
\end{align}
subject to the agent availability constraints and the delivery demands
\begin{subequations}
\begin{equation*}
\iint_{\bx^O, \bx^D} \!\!\!\!\!\!\! d\pi(\bx^O, \bx^D, \by) = \nu(\by),
\int_{\by} d\pi(\bx^O, \bx^D, \by) = \mu(\bx^O, \bx^D),    
\end{equation*}  
\end{subequations}
where $\mu(\bx^O, \bx^D)$ conceptualize the portion of demands traveling from $\bx^O$ to $\bx^D$. The marginal condition is also known as \emph{market clearing} criterion in economic literature.

Notably, the problem in \eqref{eq:obj_o} aligns with the unequal-dimensional optimal transport framework \cite{chiappori2020multidimensional}, where the matching occurs between distributions of different dimensions. In order to show this, we first redefine the delivers’ origin-destination pair as
\begin{equation*}
\bx =
\begin{bmatrix}
\bx^1\\
\bx^2
\end{bmatrix}
=
\begin{bmatrix}
\bx^O\\
\bx^D
\end{bmatrix}.
\end{equation*}
The traveling cost can then be reformulated as
\begin{equation}\label{eq:travel-cost}
c(\bx, \by) = \|\by - \bx^1\|_2^2 + \|\bx^1 - \bx^2\|_2^2 + \|\by - \bx^2\|_2^2,    
\end{equation}
where $\by$ is the location of the awaiting transportation resource. With the redefined notations, the problem can be equivalently written as follows.

\vsp
\begin{problem}[\bf \textit{Unequal-dimensional OMT}]\label{prob:1}
Given the distribution of standby agents $\nu(\by)$ over $\calY \subseteq \mathbb{R}^n$ and the distribution of deliveries' origin-destination requests $\mu(\bx)$ over $\calX \subseteq \mathbb{R}^{2n}$. The problem is to seek an allocation plan $\pi(\bx, \by)$ by minimizing
\begin{equation}\label{eq:obj_1}
\langle c, \pi \rangle = \iint_{\bx,\by} c(\bx, \by)  d\pi(\bx, \by)
\end{equation}
subject to the marginal constraint on agent availability
\begin{subequations}
\begin{equation}
\int_{\bx} d\pi(\bx, \by) = \nu(\by),    
\end{equation}
while guarantees that all delivery requests are fulfilled
\begin{equation}
\int_{\by} d\pi(\bx, \by) = \mu(\bx) =\mu(\bx^O, \bx^D).   
\end{equation}   
\end{subequations}
\end{problem}
\vsp 

In the remainder of this section, we will establish the existence, uniqueness, and smoothness of the solution to Problem~\ref{prob:1} within the framework of unequal-dimensional transportation and matching. Specifically, we show that a solution always exists from the lower semi-continuity of the cost function; the optimal matching is unique under an appropriately generalized twist condition \eqref{eq:twistcond}; and the unique matching is also smooth by showing that the cost function is in a so-called \emph{index form}.

\vsp
\begin{proposition}[\bf \textit{Existence}]\label{prop:existence}
Problem \ref{prob:1} admits a solution.
\end{proposition}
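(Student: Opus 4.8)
The plan is to prove existence by the direct method of the calculus of variations, following the classical Kantorovich argument recalled in Section~\ref{sec:prelim}. The cost $c(\bx,\by)=\|\by-\bx^1\|_2^2+\|\bx^1-\bx^2\|_2^2+\|\by-\bx^2\|_2^2$ of \eqref{eq:travel-cost} is a polynomial in its arguments, hence continuous and, in particular, lower semi-continuous and bounded below by $0$; and, regarding $\mu$ and $\nu$ as Borel probability measures on the Polish spaces $\mathbb{R}^{2n}\supseteq\calX$ and $\mathbb{R}^{n}\supseteq\calY$, respectively, the setting of the classical existence theorem is in force. First I would observe that the feasible set $\Pi(\mu,\nu)$ of couplings with marginals $\mu$ and $\nu$ is nonempty, since it contains the product measure $\mu\otimes\nu$, and that — assuming as is standard in the quadratic case that $\mu$ and $\nu$ have finite second moments — $\langle c,\mu\otimes\nu\rangle<\infty$, so the infimum in \eqref{eq:obj_1} is finite and attained only if it is attained by an admissible plan.

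Next I would establish the two ingredients of the direct method. Compactness: since $\mu$ and $\nu$ are tight, the set $\Pi(\mu,\nu)$ is tight as well — any escaping mass in a coupling forces escaping mass in one of its two marginals — so it is weakly sequentially precompact by Prokhorov's theorem; moreover $\Pi(\mu,\nu)$ is closed under weak convergence, because the marginal constraints are preserved in the limit when tested against bounded continuous functions of $\bx$ alone and of $\by$ alone. Lower semi-continuity: writing the nonnegative l.s.c. cost $c$ as the increasing pointwise limit of a sequence of bounded continuous functions and invoking monotone convergence shows that $\pi\mapsto\langle c,\pi\rangle$ is lower semi-continuous with respect to weak convergence of measures. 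Combining these, a minimizing sequence $\{\pi_k\}\subset\Pi(\mu,\nu)$ admits a subsequence $\pi_{k_j}\rightharpoonup\pi^{\star}$ with $\pi^{\star}\in\Pi(\mu,\nu)$ and $\langle c,\pi^{\star}\rangle\le\liminf_j\langle c,\pi_{k_j}\rangle=\inf_{\pi\in\Pi(\mu,\nu)}\langle c,\pi\rangle$, so $\pi^{\star}$ is an optimal allocation plan.

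The only point requiring care — more a hypothesis to record than a genuine obstacle — is well-posedness: one needs the ambient spaces to be Polish (automatic, since they are Euclidean) and $\mu$, $\nu$ to have finite second moments so that at least one admissible coupling has finite cost; with these in hand the conclusion follows immediately, essentially as a citation of \cite{villani2021topics, santambrogio2015optimal}. Note that the unequal dimensions $2n$ versus $n$ play no role here whatsoever; they become relevant only in the uniqueness and smoothness analysis developed in the sequel, via the generalized twist condition \eqref{eq:twistcond} and the index-form structure of $c$.
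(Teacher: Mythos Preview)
Your argument is correct and follows the same direct-method skeleton as the paper's proof: weak compactness of the admissible set $\Pi(\mu,\nu)$ together with lower semi-continuity of $\pi\mapsto\langle c,\pi\rangle$ yields a minimizer. The one substantive difference is in how compactness is obtained. The paper simply asserts that $\calX$ and $\calY$ are compact subsets of Euclidean space, so $\calP(\calX\times\calY)$ is automatically weakly compact and the continuous cost is bounded; no moment hypothesis is needed. You instead work on the full Polish spaces $\mathbb{R}^{2n}$ and $\mathbb{R}^n$, invoke tightness of the marginals and Prokhorov's theorem, and add a finite-second-moment assumption to keep the infimum finite. Your route is slightly more general (it does not require bounded supports) at the price of an extra hypothesis; the paper's is shorter under its standing compactness assumption. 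Either way the unequal dimensions play no role, as you correctly note.
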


\begin{proof}
To establish the existence of a solution, we first observe that $\calX \subset \mathbb{R}^{2n}$ and $\calY \subset \mathbb{R}^n$ are compact subsets. Consequently, their product space $\calX \times \calY$ is also compact, ensuring that any sequence of couplings $\{\pi_n\}$ defined on $\calX \times \calY$ has a subsequence that converges weakly to some $\pi^\ast$.

Next, the cost functional
$
\int_{\calX \times \calY} c(\bx, \by) \, d\pi(\bx, \by)    
$
is lower semi-continuous with respect to the weak convergence of measures. The set of admissible couplings is given by
\begin{align*}
&\Gamma(\mu, \nu) = \bigg\{\pi \in \mathcal{P}(\calX \times \calY) \ \Big| \\
&\phantom{xxxxxxxxxxxx} \int_{\calX} d\pi(\bx, \by) = \nu(\by), \, \int_{\calY} d\pi(\bx,\by) = \mu(\bx) \bigg\},   
\end{align*}
where $\mathcal{P}(\calX \times \calY)$ denotes the space of probability measures on $\calX \times \calY$. The set $\Gamma(\mu, \nu)$ is convex due to the linearity of the marginal constraints.

The existence of an optimal coupling is guaranteed since the cost functional is lower semi-continuous and the admissible set $\Gamma(\mu, \nu)$ is weakly compact.
\end{proof}
\vsp

Next, to show Problem \ref{prob:1} admits a unique minimizer, we validate the $x$-twist condition is satisfies for cost $c(\bx,\by)$ in \eqref{eq:travel-cost}
in unequal-dimensional OMT.

\vsp
\begin{definition}[\bf \textit{x-Twist condition}]\label{def:twist}
The $x$-twist condition is that, for fixed $\by \in \mathbb{R}^m$, the gradient of the cost function with respect to $\bx\in\mathbb R^n$ ($n>m$), $\nabla_\bx c(\bx,\by)$, is injective, i.e.,
\[
\by \mapsto \nabla_\bx c(\bx,\by),
\]
is injective for fixed $\by$.   
\end{definition}

\vsp
\begin{proposition}[\bf \textit{Uniqueness}]\label{prop:unique}
The cost function satisfies the $x$-twist condition, and the optimal allocation plan in Problem~\ref{prob:1} is thus pure and ensures uniqueness.
\end{proposition}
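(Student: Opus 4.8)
The plan is to verify the $x$-twist condition of Definition~\ref{def:twist} directly for the quadratic cost \eqref{eq:travel-cost} and then to invoke the structure theory of matching between unequal dimensions to pass from the twist condition to purity and uniqueness of the optimal plan. Writing $\bx=(\bx^1,\bx^2)$ and expanding \eqref{eq:travel-cost} gives
\begin{equation*}
c(\bx,\by) = 2\|\by\|_2^2 + 2\|\bx^1\|_2^2 + 2\|\bx^2\|_2^2 - 2\langle\by,\bx^1\rangle - 2\langle\by,\bx^2\rangle - 2\langle\bx^1,\bx^2\rangle ,
\end{equation*}
so that
\begin{equation*}
\nabla_\bx c(\bx,\by) = \begin{bmatrix} 4\bx^1 - 2\bx^2 - 2\by \\ 4\bx^2 - 2\bx^1 - 2\by \end{bmatrix} \in \mathbb{R}^{2n}.
\end{equation*}
For fixed $\bx$ this is affine in $\by$, and already its first $n$ coordinates, $\by\mapsto 4\bx^1-2\bx^2-2\by$, have Jacobian $-2\bI$, which is invertible; hence $\by\mapsto\nabla_\bx c(\bx,\by)$ is injective for each fixed $\bx$, i.e.\ $c$ satisfies the $x$-twist condition.

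It then remains to deduce purity and uniqueness. This is the content of the unequal-dimensional matching theory \cite{chiappori2020multidimensional, mccann2020optimal}: under the standing assumption that $\mu$ is absolutely continuous with respect to Lebesgue measure on $\mathbb{R}^{2n}$ (which is in any case required for the smoothness statement below), any optimal coupling $\pi$ has $c$-cyclically monotone support; combining this with $c\in C^1$ and the injectivity of $\by\mapsto\nabla_\bx c(\bx,\by)$, one shows that for $\mu$-a.e.\ $\bx$ the fiber of $\operatorname{supp}\pi$ over $\bx$ is a single point $T(\bx)$. Therefore $\pi=(\mathrm{id},T)_\sharp\mu$ with $T_\sharp\mu=\nu$, i.e.\ the matching is pure, and uniqueness of $\pi$ follows because a second optimizer would produce, by linearity, a non-graph optimal coupling (their average), contradicting purity.

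The step I expect to be the main obstacle is this last implication, where the argument genuinely departs from the classical equal-dimensional theory: in the Brenier/Gangbo--McCann setting, twist plus absolute continuity yields a transport map immediately via $c$-conjugate potentials, but here $\dim\calX=2n>n=\dim\calY$, the dual potential on $\calY$ lives in too low a dimension to determine $\by$ by itself, and one must instead argue directly that $c$-cyclical monotonicity together with the $x$-twist forces $\operatorname{supp}\pi$ to be a graph over $\calX$. I would carry this out by invoking the corresponding theorems of \cite{chiappori2020multidimensional, mccann2020optimal} and checking their hypotheses in our setting: compactness of $\calX$ and $\calY$ (noted already in the proof of Proposition~\ref{prop:existence}), $c\in C^1$, the $x$-twist just verified, and absolute continuity of $\mu$.
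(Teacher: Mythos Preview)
Your proof is correct and follows essentially the same route as the paper: both compute $\nabla_\bx c(\bx,\by)$ explicitly, observe that its dependence on $\by$ is through the invertible linear map $\by\mapsto -2\by$ in each block, and conclude the $x$-twist condition, then defer to the unequal-dimensional matching literature for purity and uniqueness. If anything, you are more careful than the paper in spelling out the hypotheses (absolute continuity of $\mu$, $c$-cyclical monotonicity, the averaging argument against two distinct optimizers) behind the implication twist $\Rightarrow$ pure and unique, whereas the paper simply asserts that injectivity ``ensures the uniqueness and purity of the match.''
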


\begin{proof}
The result follow instantly by computing gradient of $c(\bx, \by)$ with respect to $\bx = \begin{bmatrix} (\bx^O)^T, (\bx^D)^T\end{bmatrix}^T$ as 
\begin{equation*}
\nabla_\bx c(\bx, \by) =
\begin{bmatrix}
[\partial c/\partial x^O_i]_{i=1}^n \\
[\partial c/\partial x^D_i]_{i=1}^n
\end{bmatrix}
= 
\begin{bmatrix}
2(\bx^O - \by) + 2(\bx^O - \bx^D) \\
2(\bx^D - \by) + 2(\bx^D - \bx^O) \\
\end{bmatrix},  
\end{equation*}
and the inequality 
\begin{equation*}
\nabla_\bx c(\bx, \by) - \nabla_\bx c(\bx, \by') \neq 0,
\end{equation*}
thus holds for all $\bx$ if and only if $\by\neq \by'$. The injectivity of $\by \to \nabla_\bx c(\bx,\by)$ for each $\bx$ ensures the uniqueness and purity of the match.
\end{proof}
\vsp

\begin{definition}[\bf \textit{Non-degeneracy condition} \cite{chiappori2020multidimensional}]\label{def:non-deg}
The cost $c$ is said to satisfy the non-degeneracy condition 
\begin{align*}
\text{rank}\left(\nabla^2_{\bx\by} c(\bx,\by)\right) = \min\{m, n\},  \quad \mbox{ for } x\in \mathcal{X}, y\in \mathcal{Y}   
\end{align*}
where $n$ and $m$ are the dimensions of $\bx$ and $\by$.   
\end{definition}

\vsp
\begin{proposition}
The cost function $c(\bx, \by)$ in \eqref{eq:travel-cost} satisfies the non-degeneracy condition.
\end{proposition}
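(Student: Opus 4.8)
The plan is to prove this by a direct computation of the mixed second-order derivative matrix $\nabla^2_{\bx\by} c$ and then read off its rank. Note first which dimension is the binding one: here $\bx = \begin{bmatrix} (\bx^O)^T, (\bx^D)^T\end{bmatrix}^T \in \mathbb{R}^{2n}$ and $\by \in \mathbb{R}^n$, so in the notation of Definition~\ref{def:non-deg} we have $n \leftrightarrow 2n$ and $m \leftrightarrow n$, and the condition to be verified is $\text{rank}\!\left(\nabla^2_{\bx\by} c(\bx,\by)\right) = \min\{2n, n\} = n$ for all $(\bx,\by)\in\mathcal{X}\times\mathcal{Y}$.

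First I would reuse the gradient already computed in the proof of Proposition~\ref{prop:unique},
\[
\nabla_\bx c(\bx, \by) =
\begin{bmatrix}
2(\bx^O - \by) + 2(\bx^O - \bx^D) \\
2(\bx^D - \by) + 2(\bx^D - \bx^O)
\end{bmatrix},
\]
and differentiate each block with respect to $\by$. Every term that does not contain $\by$ drops out, leaving $\partial^2 c/\partial x^O_i \partial y_j = -2\delta_{ij}$ and $\partial^2 c/\partial x^D_i \partial y_j = -2\delta_{ij}$, so that
\[
\nabla^2_{\bx\by} c(\bx,\by) =
\begin{bmatrix}
-2\,I_n \\[2pt]
-2\,I_n
\end{bmatrix}
\in \mathbb{R}^{2n\times n}.
\]
The last step is to observe that this $2n\times n$ matrix contains the invertible block $-2I_n$, hence its $n$ columns are linearly independent and its rank is exactly $n = \min\{2n,n\}$. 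Since the matrix is constant in $(\bx,\by)$, the non-degeneracy condition in fact holds globally, not merely on $\mathcal{X}\times\mathcal{Y}$.

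There is essentially no obstacle here: the cost in \eqref{eq:travel-cost} is a quadratic polynomial, so all its second derivatives are constants, and the only thing to verify is that the constant mixed-derivative matrix has full column rank, which is immediate from the identity block. The one point worth a remark is the structural interpretation: both the origin coordinates $\bx^O$ and the destination coordinates $\bx^D$ couple to the agent coordinates $\by$ through the \emph{same} invertible linear map $-2I_n$, which is precisely the feature that later lets the cost be written in index form and makes the optimal map behave like a projection from the $2n$-dimensional origin--destination space onto the $n$-dimensional agent space.
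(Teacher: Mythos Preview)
Your proof is correct and follows essentially the same approach as the paper: both compute $\nabla^2_{\bx\by} c = \begin{bmatrix}-2I_n\\-2I_n\end{bmatrix}$ directly and observe that this constant $2n\times n$ matrix has full rank $n$. Your version is slightly more explicit in spelling out that $\min\{2n,n\}=n$ is the target rank and in noting the global (constant) nature of the mixed Hessian, but the argument is identical.
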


\begin{proof}
By definition, the Hessian of $c(\bx,\by)$ with respect to 
$\bx = \begin{bmatrix} (\bx^O)^T, (\bx^D)^T\end{bmatrix}^T$
is 
\begin{align*}
\nabla_{\bx\by}^2 c(\bx,\by) = 
\begin{bmatrix}
\left(\frac{\partial^2 c}{\partial x^O_{i} \partial \by_{j}}  \right)_{{i, j=1}}^{n}\\[0.14in]
\left(\frac{\partial^2 c}{\partial x^D_{i} \partial \by_{j}}  \right)_{{i, j=1}}^{n}
\end{bmatrix}=
\begin{bmatrix}
-2~\mathbf I_{n\times n}\\
-2~\mathbf I_{n\times n}
\end{bmatrix},
\end{align*}
where $\mathbf I$ is the $n$-by-$n$ identity matrix and the result follows since the matrix is full rank for all $x$ and $y$.
\end{proof}
\vsp

To obtain the potential indifference set $X(\by, \bk)$ in \cite[Section 2.2.1]{chiappori2020multidimensional} that defined as  
\begin{align*}
X(\by, \bk) := \left\{\bx \in \calX ~\,\big|\,~ \nabla_\by c(\bx,\by) = \bk\right\},   
\end{align*}
with given $\bk$, we combine the partial derivatives into the gradient
\begin{align*}
\nabla_{\mathbf{y}} c(\mathbf{x}, \mathbf{y}) =
\begin{bmatrix}
4\by - 2(\bx^O + \bx^D) \\
\end{bmatrix}
\end{align*}
The indifference set $X(\by, \bk)$ is thus defined as 
\begin{equation*}
\left \{\bx \in \calX \; \big |\, \;
\begin{aligned}
4\by - 2(\bx^O + \bx^D) = \bk
\end{aligned}\right\}.
\end{equation*}

We are now ready to discuss the property of \emph{smoothness} -- if two agents in $\calX$, say $\mathbf{x}$ and $\mathbf{x}^{\prime}$, have ``similar characteristics'' (i.e., are close in type), their corresponding matches $T(\mathbf{x})$ and $T(\mathbf{x}^{\prime})$ in $\calY$ (the paired or matched agents) should also have similar characteristics. It guarantees that small variations in agent attributes do not lead to disproportionately large changes in their assigned matches, promoting stability and robustness in the allocation process.

\vsp
\begin{remark}[\bf \textit{Smoothness}]
The mapping $T: \mathbb R^n\to \mathbb R^m$ is said to be smooth \cite{chiappori2020multidimensional} if the process of generating the matching is continuous. Qualitatively, this addresses whether tasks’ origin-destination pairs $\mathbf{x}$ and $\mathbf{x}^{\prime}$, which are close in type, are matched to agents with similarly spatial characteristics.
\end{remark}
\vsp

The \emph{nestedness condition} for determining the smoothness property of the function, by defining the 
\begin{equation*}
X_{\leq}(\by,\bk) = \left\{\bx \in \calX \ | \ \nabla_\by c(\bx,\by)  \leq \bk\right\},   
\end{equation*}
where $\bk$ is chosen so that
\begin{equation*}
\mu(X_{\leq}(\by,\bk)) = \nu\big((-\infty, \by)\big).
\end{equation*}
The model is said to be nested if, whenever $\by < \by'$, the sub-level sets are nested so that 
\begin{equation}\label{eq:nestness}
X_{\leq}(\by, \bk(\by)) \subset X_{<}\left(\by', \bk(\by')\right),    
\end{equation}
where $X_{<}(\by,\bk) = \Big\{x \in \calX \ | \ \nabla_\by c(\bx,\by) < \bk\Big \}$.

The nestedness condition \eqref{eq:nestness} is known to be the sufficient condition for a unique smooth assignment plan, according to its definition, the general nestedness condition, compared with other conditions for determining a unique solution in equal dimensional transport, depends not only on the cost itself, but the triplet $\big\{c(\bx,\by),\mu,\nu\big\}$ as it defines. 

\vsp
\begin{remark}[\bf \textit{Twist \& Nestedness}]
Twist condition (Definition \ref{def:twist}) and nestedness condition are both sufficient to guarantee solution's uniqueness. However, the nestedness condition \eqref{eq:nestness} proposes a constructive way to build a level-set solution, hence converting an unequal-dimensional matching problem to an equal-dimensional one. 
\end{remark}
\vsp

\begin{definition}[\bf \textit{Index and pseudo-index form} \cite{chiappori2017multi}]
The index and pseudo-index form is a special case of the nestedness condition, wherein no additional assumptions are needed on $\mu$ and $\nu$ except for absolute continuity. A cost function $c(\bx,\by)$ is said to be in index form\footnote{We denote the index form is no more than a projection of the space $\calX$ onto $\calY$ through $\sigma$. Consequently, elements $x \in \mathcal{X}$ from the potential indifference set $X(\mathbf{y}, \mathbf{k})$ lose their unique order and are assigned to the same $\mathbf{y} \in \mathcal{Y}$, exhibiting a ``loss of strict ordering'' within the indifference set $X(\mathbf{y}, \mathbf{k})$. Another example of a cost function in index form is $c(\mathbf{x}, \mathbf{y}) = \frac{\mathbf{x}^2}{\mathbf{y}}$.} if it can be expressed as:
\begin{align}\label{eq:index-form}
c(\bx,\by) = \sigma\big(I(\bx), \by\big),    
\end{align}
where $I: \calX \to \mathbb{R}^n$ is an index function that maps $\bx$ to a lower-dimensional space, and $\sigma: \calX \times \mathbb{R}^m \to \mathbb{R}$ is the monotonic function that depends on $\bx$ and the index  $I(\by)$.

In addition, when $m=1$, a more relaxed, pseudo-index form of the cost function $c(\bx, \by)$ can be defined as
$$
c(\bx,\by) = \alpha(\by) + \sigma\big(I(\bx), \by\big),    
$$
where addition term $\alpha: \calX \to \mathbb{R}$ is a function that depends only on $\bx$.
\end{definition}
\vsp

\begin{proposition}[\bf \textit{Smooth plan}]\label{prop:smooth}
The cost $c(\bx, \by)$ is in index form -- nestedness condition holds regardless of the choice of absolute continuous $\mu$  and  $\nu$. Thus, the optimal allocation plan is unique and also smooth.
\end{proposition}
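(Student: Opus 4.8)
The plan is to make the \emph{index form} \eqref{eq:index-form} explicit for the cost \eqref{eq:travel-cost} and then invoke the nestedness machinery of \cite{chiappori2017multi,chiappori2020multidimensional}. First I would expand the squared norms and collect the terms that couple $\bx$ and $\by$: writing $I(\bx) := \bx^O + \bx^D$ and $g(\bx) := \|\bx^O\|_2^2 + \|\bx^D\|_2^2 + \|\bx^O - \bx^D\|_2^2$, a direct computation gives
\begin{equation*}
c(\bx,\by) = 2\|\by\|_2^2 - 2\langle I(\bx),\by\rangle + g(\bx) = \sigma\big(I(\bx),\by\big) + g(\bx),
\end{equation*}
with $\sigma(\bz,\by) := 2\|\by\|_2^2 - 2\langle \bz,\by\rangle$ and $I:\mathbb{R}^{2n}\to\mathbb{R}^n$ a linear surjection. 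Since $g$ depends on $\bx$ only, $\iint g(\bx)\,d\pi(\bx,\by) = \int g\,d\mu$ is the same constant for every $\pi\in\Gamma(\mu,\nu)$, so it may be dropped from the objective; Problem~\ref{prob:1} is therefore equivalent to minimizing $\iint \sigma(I(\bx),\by)\,d\pi(\bx,\by)$, i.e.\ the cost is in index form \eqref{eq:index-form}. This is consistent with the earlier observation that $\nabla_\by c(\bx,\by) = 4\by - 2 I(\bx)$, so the potential indifference sets $X(\by,\bk)$ are affine preimages $I^{-1}(\,\cdot\,)$ that see $\bx$ only through $I(\bx)$.

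Next I would check that $I$ is a legitimate index and that the reduced cost behaves well. Because $I$ is a linear surjection and $\mu$ is absolutely continuous on $\calX\subseteq\mathbb{R}^{2n}$, the pushforward $I_\sharp\mu$ is absolutely continuous on $\mathbb{R}^n$ (Fubini along the $n$-dimensional fibres $I^{-1}(\bz)$). On $\mathbb{R}^n\times\mathbb{R}^n$ the reduced cost satisfies $\nabla_\bz\sigma = -2\by$ and $\nabla_\by\sigma = 4\by-2\bz$, so $\by\mapsto\nabla_\bz\sigma$ and $\bz\mapsto\nabla_\by\sigma$ are injective; equivalently, dropping the $\by$-only term $2\|\by\|_2^2$, $\sigma$ is equivalent to the negative inner-product cost $-2\langle\bz,\by\rangle$, which is Monge/twist-equivalent to the quadratic cost $\|\bz-\by\|_2^2$. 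Hence the reduced equal-dimensional problem between $I_\sharp\mu$ and $\nu$ is a standard Brenier problem, with a unique optimal map $S=\nabla\phi$, $\phi$ convex.

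I would then conclude via \cite[Section~2.2.1]{chiappori2020multidimensional} and \cite{chiappori2017multi}: a cost in index form satisfies the nestedness condition \eqref{eq:nestness} for \emph{every} pair of absolutely continuous marginals, with no further hypotheses on $\mu,\nu$. Indeed the level $\bk(\by)$ is fixed by the balance $\mu\big(X_{\leq}(\by,\bk(\by))\big)=\nu\big((-\infty,\by)\big)$, and since $X_{\leq}(\by,\bk) = I^{-1}\{\bz : 4\by-2\bz\leq\bk\}$ depends on $\bx$ only through $I$ while both sides of the balance are monotone in $\by$, the family $\{X_{\leq}(\by,\bk(\by))\}$ is nested as $\by$ increases; the assignment is then reconstructed on the fibres from the Brenier map of the reduced problem, i.e.\ $T(\bx)=S(I(\bx))$. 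Thus the optimal plan of Problem~\ref{prob:1} is pure, unique (reconfirming Proposition~\ref{prop:unique}), and smooth in the sense of \cite{chiappori2020multidimensional}, with $C^k$ regularity following from Caffarelli-type regularity of $S$ under the usual convexity/regularity assumptions on the supports.

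The main obstacle is not the algebra --- which is immediate --- but making the reduction airtight: verifying that $I_\sharp\mu$ inherits absolute continuity from $\mu$ (so that the reduced Brenier problem is well posed and its map is a.e.\ defined), that discarding $g(\bx)$ and the $2\|\by\|_2^2$ term is legitimate, and that the monotone selection of the thresholds $\bk(\by)$ indeed yields the nested sublevel sets required by \eqref{eq:nestness}. These are precisely the points where the index-form hypothesis of \cite{chiappori2017multi,chiappori2020multidimensional} does the work, and citing that theory is the cleanest route rather than re-deriving nestedness by hand.
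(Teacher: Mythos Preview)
Your proposal is correct and follows essentially the same route as the paper: expand the cost, discard the marginal-only terms $g(\bx)$ and $2\|\by\|_2^2$, recognize the remaining coupling $-2\langle \bx^1+\bx^2,\by\rangle$ as an index-form cost with linear index $I(\bx)=\bx^1+\bx^2$, and invoke \cite{chiappori2017multi,chiappori2020multidimensional}. You are more careful than the paper in spelling out why $I_\sharp\mu$ is absolutely continuous, why the reduced problem is Brenier, and how $T=S\circ I$ arises, but the core argument is identical.
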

\vsp

\begin{proof}
To demonstrate that the cost function  $c(\mathbf{x}, \mathbf{y})$ in \eqref{eq:travel-cost} is in index form, we begin by expressing it as 
{\footnotesize
\begin{align*}
c(\mathbf{x}, \mathbf{y}) = 2\big( |\mathbf{y}|^2 - (\mathbf{x}^1)^\top \mathbf{y} + |\mathbf{x}^1|^2 - (\mathbf{x}^1)^\top \mathbf{x}^2 + |\mathbf{x}^2|^2 - (\mathbf{x}^2)^\top \mathbf{y} \big).    
\end{align*}}%
This factorization allows us to reformulate the original optimization problem
\begin{subequations}
\begin{equation}\label{eq:obj_org}
\min_{\pi} \iint_{\bx,\by} c(\bx, \by) \ d\pi(\bx, \by)
\end{equation}
as an equivalent problem minimizing the negative correlations
\begin{equation}\label{eq:obj_inter}
\min_{\pi} \iint_{\bx,\by} -\left(\bx^1 + \bx^2\right)^T\!\! \by\  d\pi(\bx, \by)
\end{equation}    
\end{subequations}
In fact, $\pi$ is a minimizer for \eqref{eq:obj_inter} if and only if it minimizes \eqref{eq:obj_org}, see also \cite{villani2021topics}.
This term in the objective can be represented in index form \eqref{eq:index-form} as
\begin{equation*}
\hat c(\mathbf{x}, \mathbf{y}) =-(\bx^1+\bx^2)^T\by= \sigma\big(I(\mathbf{x}), \mathbf{y}\big),   
\end{equation*}
where the projection operator  $I: \mathcal{X} \to \mathbb{R}^n$  is defined by
\begin{equation*}
I(\mathbf{x}) = -(\mathbf{x}^1 + \mathbf{x}^2),    
\end{equation*}
and the bilinear function  $\sigma: \mathbb{R}^n \times \mathbb{R}^n \to \mathbb{R}$  is given by
\begin{equation*}
\sigma(\mathbf{u}, \mathbf{y}) = -\mathbf{u}^\top \mathbf{y}. 
\end{equation*}
The uniqueness and smoothness of the matching follow directly from \cite{chiappori2020multidimensional}.
\end{proof}
\vsp

\begin{remark}
The same result in Proposition \ref{prop:smooth} does not hold for Euclidean norm
$
c(\bx, \by) = |\by - \bx^1|_2 + |\bx^1 - \bx^2|_2 + |\by - \bx^2|_2.
$    
A counterexample follows directly from an extension of the well-known \emph{book shifting} example.
\end{remark}
\vsp

\begin{remark}
The results established in Proposition \ref{prop:existence}, \ref{prop:unique}, and \ref{prop:smooth} can be generalized from problem in \eqref{prob:bb} to a controllable linear system with dynamics
$$
\dot{x}(t) = A(t)x(t) + B(t)u(t),
$$
known as OMT with prior dynamics (OMT-wpd) \cite{chen2016optimal}. Given the state transition matrix $\Phi(t, s)$, the controllability Gramian reads
\begin{align*}
M(t, s) = \int_s^t \Phi(t, \tau) B(\tau) B(\tau)^T \Phi(t, \tau)^T d\tau,  
\end{align*}
that is positive definite under controllability assumption. The optimal cost of transporting mass between states $x$ and $y$ over the time interval $[0,1]$ is given by
$
c(x, y) = \frac{1}{2} (y - \Phi_{10} x)^T M_{10}^{-1} (y - \Phi_{10} x),    
$
Rewriting OMT-wpd in the standard Kantorovich formulation is achieved by introducing the linear transformation
\begin{align*}
C: \begin{bmatrix} 
x \\ y 
\end{bmatrix} 
\to 
\begin{bmatrix} \hat{x} 
\\ \hat{y} 
\end{bmatrix} 
= 
\begin{bmatrix} 
M_{10}^{-1/2} \Phi_{10} x 
\\ M_{10}^{-1/2} y 
\end{bmatrix},
\end{align*}
leading to the reformulation
$
\min_{\hat{\pi}} \int_{\mathbb{R}^n \times \mathbb{R}^n} \frac{1}{2} \| \hat{y} - \hat{x} \|^2 \hat{\pi}(d\hat{x} d\hat{y}).
$
We refer interested readers to \cite[Section 3]{chen2016optimal} for a more comprehensive and detailed investigation of OMT-wpd.
\end{remark}
\vsp

\section{Illustrative Example}\label{sec:example}
We present illustrative examples in this section to demonstrate our proposed framework. The code for all experiments conducted below is available at {\blue \url{github.com/dytroshut/task-allocation}}. The discretized version of Problem \ref{prob:1} can be formulated as a linear programming problem, which adopted a rich library of numerical solvers. For all numerical examples, we utilized CVX \cite{cvx} as the optimization solver.

We first present one-dimensional example and visualize the result. Specifically, we assume $\mathbf{x}^1, \mathbf{x}^2, \mathbf{y} \in \mathbb{R}$, allowing the matching $\mathbf{y} = T(\mathbf{x}): \mathbb{R}^2 \to \mathbb{R}$ to be plotted in 3-dimenional space. The origin-destination requests from customers are represented by the measure $\mu(\mathbf{x}^1, \mathbf{x}^2)$, which aims to be matched with the available agents described by the measure $\nu(\mathbf{y})$. The smoothness of matching is shown in Fig. \ref{fig:fixed1}.
\begin{figure}[htb!]
    \centering
    \includegraphics[trim={0.5cm 0.2cm 0.2cm 1.1cm},clip,width=0.75\linewidth]{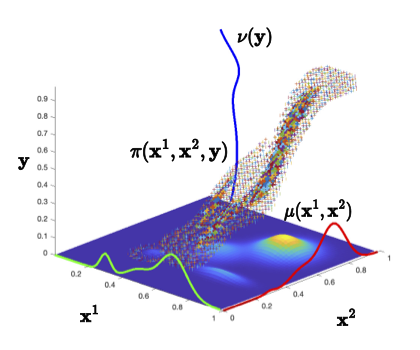}
    \caption{Two- to one-dimensional matching with fixed drone spatial distribution $\nu(\by)$ and the OD distribution $\mu(\bx^1,\bx^2)$ given. The size of the crosses corresponds to the mass at each point of $\pi(\mathbf{x}^1, \mathbf{x}^2, \mathbf{y})$.}
    \label{fig:fixed1}
\end{figure}

Next, we consider a synthetic example in Fig. \ref{fig:example2}, each drone’s spatial position is represented as a three-dimensional vector $\by$, and the origin-destination request is described by a six-dimensional vector $\bx$. We begin with three task with origin-destination pair $\left(\bx^{O}(1), \bx^{D}(1)\right)$, $\left(\bx^{O}(2), \bx^{D}(2)\right)$, and $\left(\bx^{O}(3)), \bx^{D}(3)\right)$. Three available drones providers, located at $\by_1$, $\by_2$, and $\by_3$ are waiting to be matched with three tasks.
\begin{figure}[htb!]
    \centering
    \includegraphics[trim={0.85cm 0.2cm 0.2cm 0cm},clip,width=0.493\linewidth]{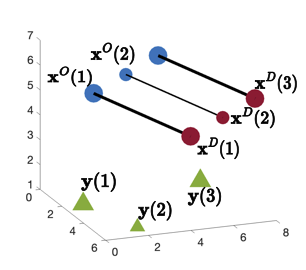}
    \includegraphics[trim={0.85cm 0.2cm 0.2cm 0cm},clip,width=0.493\linewidth]{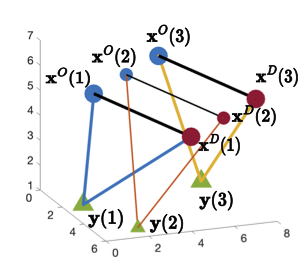}
    \caption{The drones positioned at three distinct locations are matched to three tasks with their origin-destination (OD) given.}
    \label{fig:example2}
\end{figure}

Lastly, we present an example of a medical delivery task in the city of Stockholm, Sweden. Assuming there are 30 ride requests randomly generated at specific locations in the city, the goal is to match drones distributed throughout the area to fulfill these requests. For simplicity, we assume that all drones operate at a fixed altitude, meaning $\bx^{O}_3 = \bx^{D}_3 = \text{constant}$, which is omitted from the formulation. The origin and destination of each ride are marked in red and connected by a straight line to indicate the trip as shown in Fig. \ref{fig:example3}.

\begin{figure}[htb!]
    \centering %
    \includegraphics[trim={1.72cm 1cm 0 0},clip,width=0.8\linewidth]{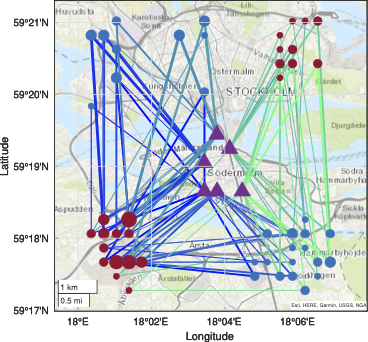}
    \caption{Thirty allocation tasks, with origins in blue and destinations in red, are randomly generated within specific locations in Stockholm, Sweden. Drones, represented by purple triangles, are distributed throughout the city and matched to fulfill these requests.}
    \label{fig:example3}
\end{figure}

\section{Conclusion}\label{sec:conclusion}
In this paper, motivated by medical delivery by drone, we propose a task allocation problem for macroscopic multi-agent system as unequal-dimensional optimal transport problem, modeling drone locations and origin-destination requests as probability distributions. With a cost functional that reflects allocation expenses, we established the existence, uniqueness, and smoothness of the optimal allocation plan. Numerical experiments validated the framework, demonstrating its effectiveness in cost-efficient task allocation. Future work will address unbalanced supply-demand scenarios and shared rides with intermediate stops. We also aim to develop distributed algorithms for large-scale problems and integrate real-time data for dynamic adaptation.


\section*{References}
\bibliographystyle{plain}
\bibliography{references}

\end{document}